 \newtheorem{thm}{Theorem}[section]
 \newtheorem{lem}[thm]{Lemma}
 \newtheorem{prop}[thm]{Proposition}
 \theoremstyle{definition}
 \theoremstyle{remark}
 \numberwithin{equation}{section}
\begin{document}

 \newtheorem{rema}[thm]{Remark}{\hspace*{4mm}}
 \newtheorem{definition}[thm]{Definition}
%-------------------------------------------------------------------------
% editorial commands: to be inserted by the editorial office
%
%\firstpage{1}
%\volume{228}
%\Copyrightyear{2004}
%\DOI{003-0001}
%
%
%\seriesextra{Just an add-on}
%\seriesextraline{This is the Concrete Title of this Book\br H.E. R and S.T.C. W, Eds.}
%
% for journals:
%
%\firstpage{1}
%\issuenumber{1}
%\Volumeandyear{1 (2004)}
%\Copyrightyear{2004}
%\DOI{003-xxxx-y}
%\Signet
%\commby{inhouse}
%\submitted{March 14, 2003}
%\received{March 16, 2000}
%\revised{June 1, 2000}
%\accepted{July 22, 2000}
%
%
%
%---------------------------------------------------------------------------
%Insert here the title, affiliations and abstract:
%
\title {The Square of Opposition in Orthomodular Logic}
%----------Author 1
\author{H. Freytes}

\address{%
Universita degli Studi di Cagliari\\
Via Is Mirrionis 1\\
09123, Cagliari \\
Italia\\
Instituto Argentino de Matem\'atica\\
Saavedra 15\\
Buenos Aires\\
Argentina}

\email{hfreytes@gmail.com}

%\thanks{This work was completed with the support of our
%\TeX-pert.}
%----------Author 2
\author{C. de Ronde}
\address{Center Leo Apostel \br
Krijgskundestraat 33\br 1160 Brussels\br Belgium}
\email{cderonde@vub.ac.be}
%----------Author 3
\author{G. Domenech}
\address{Instituto de Astronom\'{\i}a y F\'{\i}sica del Espacio \br
CC 67, Suc 28\br 1428 Buenos Aires\br Argentina}
\email{domenech@iafe.uba.ar}
%----------classification, keywords, date
\subjclass{03G12; 06C15; 03B45}

\keywords{square of opposition, modal orthomodular logic, classical
consequences}

\date{January 1, 2004}
%----------additions
%\dedicatory{To my boss}
%%% ----------------------------------------------------------------------

\begin{abstract}
In Aristotelian logic, categorical propositions are divided in
Universal Affirmative, Universal Negative, Particular Affirmative
and Particular Negative. Possible relations between two of the
mentioned type of propositions are encoded in the square of
opposition. The square expresses the essential properties of monadic
first order quantification which, in an algebraic approach, may be
represented taking into account monadic Boolean algebras. More
precisely, quantifiers are considered as modal operators acting on a
Boolean algebra and the square of opposition is represented by
relations between certain terms of the language in which the
algebraic structure is formulated. This representation is sometimes
called the modal square of opposition. Several generalizations of
the monadic first order logic can be obtained by changing the
underlying Boolean structure by another one giving rise to new
possible interpretations of the square.
\end{abstract}

%%% ----------------------------------------------------------------------
\maketitle
%%% ----------------------------------------------------------------------
%\tableofcontents
%\section{Document Preamble}
%Start the article with the command

%\begin{verbatim}\documentclass{birkmult}

\section*{Introduction}
In Aristotelian logic, categorical propositions are divided into
four basic types: {\it Universal Affirmative}, {\it Universal
Negative}, {\it Particular Affirmative} and {\it Particular
Negative}. The possible relations between each two of the mentioned
propositions are encoded in the famous {\it Square of Opposition}.
The square expresses the essential properties of the monadic first
order quantifiers $\forall$, $\exists$. In an algebraic approach,
these properties can be represented within the frame of monadic
Boolean algebras \cite{HAL}. More precisely, quantifiers are
considered as modal operators acting on a Boolean algebra while the
Square of Opposition is represented by relations between certain
terms of the language in which  the algebraic structure is
formulated. This representation is sometimes called {\it Modal
Square of Opposition} and is pictured as follows:

\vspace{0.6cm}

\begin{center}
\unitlength=1mm
\begin{picture}(20,20)(0,0)
\put(3,16){\line(3,0){16}} \put(-10,12){\line(0,-2){16}}
\put(3,-8){\line(1,0){16}} \put(31,12){\line(0,-2){16}}

\put(-10,16){\makebox(0,0){$\neg \Diamond \neg p$}}
\put(30,16){\makebox(0,0){$\neg \Diamond p$}}
\put(-10,-8){\makebox(0,0){$\Diamond p$}}
\put(32,-8){\makebox(0,0){$\Diamond \neg p$}}

\put(4,20){\makebox(15,0){$contraries$}}
\put(-24,5){\makebox(-5,0){$subalterns$}}
\put(14,-13){\makebox(-5,2){$subcontraries$}}
\put(46,5){\makebox(-1,2){$subalterns$}}
\put(12,4){\makebox(-1,2){$contradictories$}}

\put(-3,13){\line(3,-2){7}} \put(17,0){\line(3,-2){7}}

\put(24,13){\line(-3,-2){7}} \put(5,0){\line(-3,-2){7}}

\end{picture}
\end{center}

\vspace{1.5cm}

The interpretations given to $\Diamond$ from different modal logics
determine the corresponding versions of the modal Square of
Opposition. By changing the underlying Boolean structure we obtain
several generalizations of the monadic first order logic (see for
example \cite{HAJ}). In turn, these generalizations give rise to new
interpretations of the Square.

The aim of this paper is to study the Square of Opposition in an
orthomodular structure enriched with a monadic quantifier known as
{\it Boolean saturated orthomodular lattice } \cite{DFD1}.  The
paper is structured as follows. Section 1 contains generalities on
orthomodular lattices. In Section 2, the physical motivation for the
modal enrichment of the orthomodular structure is presented. In
Section 3 we formalize the concept of classical consequence with
respect to a property of a quantum system. Finally, in Section 4,
logical relationships between the propositions embodied in a square
diagram are studied in terms of  classical consequences and
contextual valuations.

\section{Basic Notions}

We recall from \cite{Bur},  \cite{KAL} and \cite{MM} some notions of
universal algebra and lattice theory that will play an important
role in what follows.  Let ${\mathcal{L}}=\langle
{\mathcal{L}},\lor,\land, 0, 1\rangle$ be a bounded lattice. An
element $c\in \mathcal{L}$ is said to be a {\it complement} of $a$
iff $a\land c = 0$ and $a\lor c = 1$.  Given $a, b, c$ in
$\mathcal{L}$, we write: $(a,b,c)D$\ \ iff $(a\lor b)\land c =
(a\land c)\lor (b\land c)$; $(a,b,c)D^{*}$ iff $(a\land b)\lor c =
(a\lor c)\land (b\lor c)$ and $(a,b,c)T$\ \ iff $(a,b,c)D$,
(a,b,c)$D^{*}$ hold for all permutations of $a, b, c$. An element
$z$ of a lattice $\mathcal{L}$ is called {\it central} iff for all
elements $a,b\in \mathcal{L}$ we have $(a,b,z)T$ and $z$ is
complemented. We denote by $Z(\mathcal{L})$ the set of all central
elements of $\mathcal{L}$ and it is called the {\it center} of
$\mathcal{L}$.

A {\it  lattice with involution} \cite{Ka} is an algebra $\langle
\mathcal{L}, \lor, \land, \neg \rangle$ such that $\langle
\mathcal{L}, \lor, \land \rangle$ is a  lattice and $\neg$ is a
unary operation on $\mathcal{L}$ that fulfills the following
conditions: $\neg \neg x = x$ and $\neg (x \lor y) = \neg x \land
\neg y$. An {\it orthomodular lattice} is an algebra $\langle
{\mathcal{L}}, \land, \lor, \neg, 0,1 \rangle$ of type $\langle
2,2,1,0,0 \rangle$ that satisfies the following conditions

\begin{enumerate}
\item
$\langle {\mathcal{L}}, \land, \lor, \neg, 0,1 \rangle$ is a bounded
lattice with involution,

\item
$x\land  \neg x = 0 $.

\item
$x\lor ( \neg x \land (x\lor y)) = x\lor y $

\end{enumerate}

We denote by ${\mathcal OML}$ the variety of orthomodular lattices.
Let $L$ be an orthomodular lattice and $a,b \in L$. Then $a$
commutes with $b$ if and only if $a = (a\land b) \lor (a \land \neg
b)$. A non-empty subset $A$ is called a {\it Greechie set} iff for
any three different elements of $A$, at least one of them commutes
with the other two. If $A$ is a Greechie set in $L$ then $\langle A
\rangle_L$, i.e. the sublattice generated by $A$, is distributive
\cite{GREE}. {\it Boolean algebras} are orthomodular lattices
satisfying the {\it distributive law} $x\land (y \lor z) = (x \land
y) \lor (x \land z)$. We denote by ${\bf 2}$ the Boolean algebra of
two elements. Let $A$ be a Boolean algebra. Then, as a consequence
of the application of the {\it maximal filter theorem} for Boolean
algebras, there always exists a Boolean homomorphism $f:A\rightarrow
{\bf 2}$.  If $\mathcal{L}$ is a bounded lattice then
$Z(\mathcal{L})$ is a Boolean sublattice of $\mathcal{L}$ {\rm
\cite[Theorem 4.15]{MM}}.

\section{Modal Propositions about Quantum Systems}

In the usual terms of quantum logic \cite{ByvN, Jauch68}, a property
of a system is related to a subspace of the Hilbert space ${\mathcal
H}$ of its (pure) states or, analogously, to the projector operator
onto that subspace. A physical magnitude ${\mathcal M}$ is
represented by an operator $\bf M$ acting over the state space. For
bounded self-adjoint operators, conditions for the existence of the
spectral decomposition ${\bf M}=\sum_{i} a_i {\bf P}_i=\sum_{i} a_i
|a_i\rangle\langle a_i|$ are satisfied. The real numbers $a_i$ are
related to the outcomes of measurements of the magnitude ${\mathcal
M}$ and projectors $|a_i\rangle\langle a_i|$ to the mentioned
properties. Thus, the physical properties of the system are
organized in the lattice of closed subspaces ${\mathcal L}({\mathcal
H})$. Moreover, each self-adjoint operator $\bf M$  has associated a
Boolean sublattice $W_{\bf{M}}$ of $L({\mathcal H})$ which we will
refer to as the spectral algebra of the operator $\bf M$. More
precisely, the family $\{{\bf  P_i}\}$ of projector operators is
identified as elements of $W_{\bf{M}}$. Assigning values to a
physical quantity ${\mathcal M}$ is equivalent to establishing a
Boolean homomorphism $v: W_{\bf{M}} \rightarrow {\bf 2}$ which we
call {\it contextual valuation}. Thus, we can say that it makes
sense to use the ``classical discourse''
---this is, the classical logical laws are valid--- within the
context given by ${\bf M}$.

Modal interpretations of quantum mechanics \cite{Dieks1, Dieks2,
VF2} face the problem of finding an objective reading of the
accepted mathematical formalism of the theory, a reading ``in terms
of properties possessed by physical systems, independently of
consciousness and measurements (in the sense of human
interventions)''\cite{Dieks2}. These interpretations intend to
consistently include the possible properties of the system in the
discourse establishing a new link between the state of the system
and the probabilistic character of its properties, namely,
sustaining that the interpretation of the quantum state must contain
a modal aspect. The name modal interpretation was used for the first
time by B. van Fraassen \cite{VF1} following modal logic, precisely
the logic that deals with \emph{possibility} and \emph{necessity}.
The fundamental idea is to interpret ``the formalism as providing
information about properties of physical systems''. A physical
property of a system is ``a definite value of a physical quantity
belonging to this system; i.e., a feature of physical reality''
\cite{Dieks1} and not a mere measurement outcome. As usual, definite
values of physical magnitudes correspond to yes/no propositions
represented by orthogonal projection operators acting on vectors
belonging to the Hilbert space of the (pure) states of the system
\cite{Jauch68}.

The proposed modal system  \cite{DFD1, DFD2} over which  we now
develop an interpretation of the Square of Opposition, is based on
the study of the  ``{\it classical consequences}'' that result from
assigning values to a physical quantity. In precise terms,  we
enriched the orthomodular structure with a modal operator taking
into account the following considerations:\\

1) Propositions about the properties of the physical system are
interpreted in the orthomodular lattice of closed subspaces of
$\mathcal{H}$. Thus, we retain this structure in our extension.\\

2) Given a proposition about the system, it is possible to define a
context from which one can predicate with certainty about it
together with a set of propositions that are compatible with it and,
at the same time, predicate probabilities about the other ones (Born
rule). In other words, one may predicate truth or falsity of all
possibilities at the same time, i.e. possibilities allow an
interpretation in a Boolean algebra. In rigorous terms, for each
proposition $p$, if we refer with $\Diamond p$ to the possibility of
$p$, then $\Diamond p$ will be a central element of a orthomodular
structure.\\

3) If $p$ is a proposition about the system and $p$ occurs, then it
is trivially possible that $p$ occurs. This is expressed as $p \leq
\Diamond p$. \\

4) Let $p$ be a property appertaining to a context ${\mathcal M}$.
Assuming that $p$ is an actual property (for example the result of a
filtering measurement) we may derive from it  a set of propositions
(perhaps not all of them encoded in the original Hilbert lattice of
the system) which we call {\it classical consequences}. For example,
let $q$ be another property of the system and assign to $q$ the
probability $prob(q) = r$ via the Born rule. Then equality $prob(q)
= r$ will be considered as a  classical consequence of $p$. In fact,
the main characteristic of this type of classical consequences is
that it is possible to simultaneously predicate the truth of all of
them (and the falsity of their negations) whenever $p$ is true. The
formal representation of the concept of classical consequence is the
following: A proposition $t$ is a classical consequence of $p$ iff
$t$ is in the center of an orthomodular lattice containing $p$ and
satisfies the property $p\leq t$. These classical consequences are
the same ones as those which would be obtained by considering the
original actual property $p$ as a possible one $\Diamond p$.
Consequently $\Diamond p$ must precede all classical consequences of
$p$. This is interpreted in the following way: $\Diamond p$ is the
smallest
central element greater than $p$.\\

From consideration 1, it follows that the original orthomodular
structure is maintained. The other considerations are satisfied if
we consider a modal operator $\Diamond$ over an orthomodular lattice
$\mathcal{L}$  defined as $$\Diamond a = Min \{z\in
Z({\mathcal{L}}): a\leq z \}$$ with $Z({\mathcal{L}})$ the center of
$\mathcal{L}$. When this minimum exists for each $a\in \mathcal{L}$
we say that $\mathcal{L}$ is a {\it Boolean saturated orthomodular
lattice}. We have shown that this enriched orthomodular structure
can be axiomatized by equations conforming a variety denoted by
${\mathcal OML}^\Diamond$ \cite{DFD1}. More precisely, each element
of ${\mathcal OML}^\Diamond$ is an algebra $ \langle {\mathcal{L}},
\land, \lor, \neg, \Diamond, 0, 1 \rangle$ of type $ \langle 2, 2,
1, 1, 0, 0 \rangle$ such that $ \langle {\mathcal{L}}, \land, \lor,
\neg, 0, 1 \rangle$ is an orthomodular lattice and $\Box$ satisfies
the following equations:

\begin{enumerate}

\item[S1]
$x \leq \Diamond x$ \hspace{3cm} S5 \hspace{0.2cm} $y = (y\land
\Diamond x) \lor (y \land \neg \Diamond x)$

\item[S2]
$\Diamond 0 = 0$  \hspace{3.1 cm} S6 \hspace{0.2cm} $\Diamond (x
\land \Diamond y ) = \Diamond x \land \Diamond y $

\item[S3]
$\Diamond \Diamond x = \Diamond x$  \hspace{2.5cm} S7 \hspace{0.2cm}
$\neg \Diamond x \land \Diamond y   \leq \Diamond (\neg x \land (y
\lor x)) $

\item[S4]
$\Diamond (x \lor y) = \Diamond x  \lor  \Diamond y$

\end{enumerate}

Orthomodular complete lattices are examples of Boolean saturated
orthomodular lattices. We can embed each orthomodular lattice
$\mathcal{L}$ in an element $\mathcal{L}^{\Diamond} \in  {\mathcal
OML}^\Diamond$ {see \rm \cite[Theorem 10]{DFD1}}. In general,
$\mathcal{L}^{\Diamond}$ is referred as a {\it modal extension of
$\mathcal{L}$}. In this case we may see the lattice $\mathcal{L}$ as
a subset of $\mathcal{L}^{\Diamond}$.

\section{Modal Extensions and Classical Consequences}
We begin our study of the Square of Opposition analyzing the
classical consequences that can be derived from a proposition about
the system. This idea was suggested in the condition 4 of the
motivation of the structure ${\mathcal OML}^\Diamond$. In what
follows we express the notion of classical consequence as a formal
concept in ${\mathcal OML}^\Diamond$. We first need the following
technical results:

\begin{prop}\label{GREE}
Let $L$ be an orthomodular lattice. If $A$ is a Greechie set in $L$
such that for each $a\in A, \neg a \in A$ then, $\langle A \rangle_L
$ is Boolean sublattice.
\end{prop}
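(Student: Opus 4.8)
The plan is to use the result already quoted in Section~1 (from \cite{GREE}) that a Greechie set generates a distributive sublattice, and then to upgrade ``distributive'' to ``Boolean'' by exploiting the hypothesis that $A$ is closed under $\neg$. The guiding observation is that a bounded distributive lattice in which every element possesses a complement is automatically a Boolean algebra; so it suffices to show that $\langle A\rangle_L$ inherits the orthocomplementation, the bounds, and the complements of the ambient orthomodular lattice $L$. First I would invoke \cite{GREE} to record that $\langle A\rangle_L$ is distributive, after which all remaining work concerns only the involution.

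The central step is to verify that $\langle A\rangle_L$ is closed under $\neg$. For this I would introduce the set $S=\{x\in\langle A\rangle_L : \neg x\in\langle A\rangle_L\}$ and check that $S$ is itself a sublattice of $L$ containing $A$. Membership $A\subseteq S$ is immediate from the hypothesis that $\neg a\in A$ for all $a\in A$. Closure of $S$ under $\lor$ and $\land$ rests on the De~Morgan identities $\neg(x\lor y)=\neg x\land\neg y$ and $\neg(x\land y)=\neg x\lor\neg y$, the second being derivable from the involution axioms $\neg\neg x=x$ and $\neg(x\lor y)=\neg x\land\neg y$. Indeed, if $\neg x,\neg y\in\langle A\rangle_L$, then $\neg(x\lor y)$ and $\neg(x\land y)$ are respectively a meet and a join of elements of $\langle A\rangle_L$, hence again lie in the sublattice. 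Since $S$ is a sublattice containing $A$, minimality of $\langle A\rangle_L$ forces $\langle A\rangle_L\subseteq S$, which is precisely the assertion that $\neg x\in\langle A\rangle_L$ for every $x\in\langle A\rangle_L$.

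It remains to locate the bounds inside the sublattice and to exhibit complements. Fixing any $a\in A$ (which exists since $A$ is non-empty), the orthomodular identities $a\land\neg a=0$ and $a\lor\neg a=1$ place both $0$ and $1$ in $\langle A\rangle_L$, so this sublattice is bounded. Assembling the pieces, $\langle A\rangle_L$ is a bounded distributive lattice closed under $\neg$, and for each of its elements $x$ the element $\neg x$ belongs to $\langle A\rangle_L$ and satisfies $x\land\neg x=0$ and $x\lor\neg x=1$, inherited from $L$. Thus every element of $\langle A\rangle_L$ is complemented, and a complemented distributive lattice is a Boolean algebra, giving the claim.

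I expect the closure-under-$\neg$ step to be the only real obstacle: the sublattice generated by $A$ is \emph{a priori} closed only under $\lor$ and $\land$, so one must argue that it nonetheless absorbs complements. The $S$-argument above settles this cleanly, and the identification of $0,1$ together with the passage from ``complemented distributive'' to ``Boolean'' is routine.
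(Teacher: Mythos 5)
Your proposal is correct and follows essentially the same route as the paper: cite Greechie's theorem for distributivity of $\langle A\rangle_L$, then establish closure under $\neg$ via the De Morgan laws, and conclude that a complemented distributive lattice is Boolean. The only difference is cosmetic bookkeeping — you run the closure argument through the sublattice $S=\{x\in\langle A\rangle_L:\neg x\in\langle A\rangle_L\}$ where the paper does an induction on term complexity — and you are slightly more careful than the paper in explicitly locating $0$ and $1$ inside the sublattice.
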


\begin{proof}
It is well known from {\rm \cite{GREE}} that $\langle A \rangle_L $
is a distributive sublattice of $L$. Since distributive orthomodular
lattices are Boolean algebras, we only need to see that $\langle A
\rangle_L $ is closed by $\neg$. To do that we use induction on the
complexity of terms of the subuniverse generated by  $A$. For
$comp(a) = 0$, it follows from the fact that $A$ is closed by
negation. Assume validity for terms of the complexity less than $n$.
Let $\tau$ be a term such that $comp(\tau)= n$. If $\tau = \neg
\tau_1$ then $\neg \tau \in \langle A \rangle_L$ since $\neg \tau =
\neg \neg \tau_1 = \tau_1$ and $\tau_1 \in \langle A \rangle_L$. If
$\tau = \tau_1 \land \tau_2$, $\neg \tau = \neg \tau_1 \lor \neg
\tau_2$. Since $comp(\tau_i) < n$, $\neg \tau_i \in \langle A
\rangle_L$ for $i= 1,2$ resulting $\neg \tau \in \langle A
\rangle_L$. We use the same argument in the case $\tau = \tau_1 \lor
\tau_2$. Finally $\langle A \rangle_L$ is a Boolean sublattice.
%\qed
\end{proof}

Since the center $Z({\mathcal L}^\Diamond)$ is a Boolean algebra, it
represents a fragment of discourse added to ${\mathcal L}$ in which
the laws of the classical logic are valid. Thus the modal extension
${\mathcal L} \hookrightarrow {\mathcal L}^\Diamond $ is a structure
that rules the mentioned fragment of classical discourse and the
properties about a quantum system encoded in ${\mathcal L}$. Let $W$
be a Boolean sub-algebra of ${\mathcal L}$ (i.e. a context). Note
that $W \cup Z({\mathcal L}^\Diamond)$ is a Greechie set closed by
$\neg$. Then by Proposition \ref{GREE}, $\langle W \cup Z({\mathcal
L}^\Diamond) \rangle_{{\mathcal L}^\Diamond}$ is a Boolean
sub-algebra of $Z({\mathcal L}^\Diamond)$. This represents the
possibility to fix a context  and compatibly add a fragment of
classical discourse. We will refer to the Boolean algebra
$W^{\Diamond} = \langle W \cup Z({\mathcal L}^\Diamond)
\rangle_{{\mathcal L}^\Diamond}$ as a {\it classically expanded
context}. Taking into account that assigning values to a physical
quantity $p$ is equivalent to fix a context $W$ in which $p\in W$
and establish a Boolean homomorphism $v: W \rightarrow {\bf 2}$ such
that $v(p)=1$, we give the following definition of classical
consequence.

\begin{definition}\label{CLASCONS0}
{\rm Let ${\mathcal L}$ be an orthomodular lattice, $p\in {\mathcal
L}$ and ${\mathcal L}^\Diamond \in {\mathcal OML}^\Diamond$ a modal
extension of $\mathcal{L}$. Then $z \in Z({\mathcal L}^\Diamond) $
is said to be a {\it classical consequence} of $p$ iff for each
Boolean sublattice $W$ in $\mathcal{L}$ (with $p\in W$) and each
Boolean valuation $v:W^{\Diamond} \rightarrow {\bf 2}$, $v(z) = 1$
whenever $v(p)=1$. }
\end{definition}

$v(p) = 1$ implies $v(z) = 1$ in Definition \ref{CLASCONS0} is a
relation in a classically expanded context that represents, in an
algebraic way, the usual logical consequence of $z$ from $p$. We
denote by $Cons_{\mathcal{L}^\Diamond}(p)$ the set of classical
consequences of $p$ in the modal extension ${\mathcal L}^\Diamond$.

\begin{prop}\label{CLASCONS}
Let ${\mathcal L}$ be an orthomodular lattice, $p\in {\mathcal L}$
and ${\mathcal L}^\Diamond \in {\mathcal OML}^\Diamond$ a modal
extension of ${\mathcal L}$. Then we have that $$Cons_{{\mathcal
L}^\Diamond}(p) = \{z\in Z({\mathcal L}^\Diamond): p \leq z \} =
\{x\in Z({\mathcal L}^\Diamond): \Diamond p \leq z \} $$
\end{prop}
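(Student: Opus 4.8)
The plan is to prove the two displayed equalities separately, dealing first with the right-hand one, which is purely order-theoretic, and then with the left-hand one by a double inclusion. For the identity $\{z \in Z(\mathcal{L}^\Diamond) : p \leq z\} = \{z \in Z(\mathcal{L}^\Diamond) : \Diamond p \leq z\}$ I would fix a central element $z$ and argue both implications. If $\Diamond p \leq z$, then axiom S1 gives $p \leq \Diamond p \leq z$. Conversely, if $p \leq z$ with $z$ central, then $z$ belongs to $\{w \in Z(\mathcal{L}^\Diamond) : p \leq w\}$, and since in the modal extension $\Diamond p = \mathrm{Min}\{w \in Z(\mathcal{L}^\Diamond) : p \leq w\}$ by the very definition of the operator, minimality yields $\Diamond p \leq z$. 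This settles the second equality with no real difficulty, so the content of the proposition is concentrated in $Cons_{\mathcal{L}^\Diamond}(p) = \{z \in Z(\mathcal{L}^\Diamond) : p \leq z\}$; write $C$ for the right-hand set.

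For the inclusion $C \subseteq Cons_{\mathcal{L}^\Diamond}(p)$ I would take $z \in Z(\mathcal{L}^\Diamond)$ with $p \leq z$ and check the defining condition of a classical consequence directly. Given any Boolean sublattice $W \subseteq \mathcal{L}$ with $p \in W$ and any valuation $v : W^\Diamond \to {\bf 2}$, note that $p \in W \subseteq W^\Diamond$ and $z \in Z(\mathcal{L}^\Diamond) \subseteq W^\Diamond$, so both elements lie in the Boolean algebra $W^\Diamond$, which is Boolean by Proposition \ref{GREE}. Since $v$ is a Boolean homomorphism it is monotone, so $p \leq z$ forces $v(p) \leq v(z)$; hence $v(p) = 1$ implies $v(z) = 1$, and therefore $z \in Cons_{\mathcal{L}^\Diamond}(p)$.

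For the reverse inclusion I would argue by contraposition, and this is where the main obstacle lies: one must manufacture a single context and a single valuation that separate $p$ from $z$. Suppose $z \in Z(\mathcal{L}^\Diamond)$ with $p \not\leq z$. The natural choice of context is the Boolean subalgebra $W = \langle \{p\} \rangle_{\mathcal{L}} = \{0, p, \neg p, 1\}$ of $\mathcal{L}$, which plainly contains $p$ (and degenerates harmlessly to $\{0,1\}$ when $p \in \{0,1\}$, the case $p=0$ being vacuous since then $p \leq z$ already). Forming $W^\Diamond = \langle W \cup Z(\mathcal{L}^\Diamond)\rangle_{\mathcal{L}^\Diamond}$, Proposition \ref{GREE} guarantees that $W^\Diamond$ is a Boolean algebra containing both $p$ and $z$. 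Because $W^\Diamond$ is a sublattice of $\mathcal{L}^\Diamond$ the order is inherited, so $p \not\leq z$ persists in $W^\Diamond$, equivalently $p \land \neg z \neq 0$ there. The decisive step is then to apply the maximal filter theorem to the Boolean algebra $W^\Diamond$: an ultrafilter containing the nonzero element $p \land \neg z$ yields a Boolean homomorphism $v : W^\Diamond \to {\bf 2}$ with $v(p \land \neg z) = 1$, whence $v(p) = 1$ while $v(z) = 0$. This exhibits a context and valuation witnessing $z \notin Cons_{\mathcal{L}^\Diamond}(p)$, so by contraposition every classical consequence of $p$ lies in $C$, completing the proof.
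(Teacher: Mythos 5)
Your proof is correct and follows essentially the same route as the paper's: the same reduction of the second equality to the definition of $\Diamond$, the same choice of context $W=\{0,p,\neg p,1\}$, and the same appeal to the maximal filter theorem in $W^\Diamond$ to produce a valuation separating $p$ from $z$. The only cosmetic differences are that you spell out the ``trivial'' inclusion via monotonicity and phrase the separation through an ultrafilter containing the nonzero element $p\land\neg z$ rather than a maximal filter containing $p$ but not $z$.
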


\begin{proof}
$\{z\in Z({\mathcal L}^\Diamond): p \leq z \} = \{z\in Z({\mathcal
L}^\Diamond): \Diamond p \leq z \}$ follows from definition of
$\Diamond$. The inclusion $\{z\in Z({\mathcal L}^\Diamond): \Diamond
p \leq z \} \subseteq Cons_{{\mathcal L}^\Diamond}(p)$ is trivial.
Let $z\in Cons_{{\mathcal L}^\Diamond}(p)$ and suppose that $p \not
\leq z$. Consider the Boolean sub-algebra of ${\mathcal L}$ given by
$W = \{p, \neg p, 0,1 \}$. By the maximal filter theorem, there
exists a maximal filter $F$ in $W^\Diamond$ such that $p\in F$ and
$z \not \in F$. If we consider the quotient Boolean algebra
$W^\Diamond/F = {\bf 2}$, the natural Boolean homomorphism $f:
W^\Diamond \rightarrow {\bf 2}$ satisfies that  $f(p) = 1$ and $f(z)
= 0$, which is a contradiction. Hence $p \not \leq z$ and $z\in
Cons_{{\mathcal L}^\Diamond}(p)$.
%\qed
\end{proof}

The equality $Cons_{{\mathcal L}^\Diamond}(p) = \{x\in Z({\mathcal
L}^\Diamond): \Diamond p \leq z \} $ given in Proposition
\ref{CLASCONS} states that the notion of classical consequence of
$p$ results independent of the choice of the context  $W$ in which
$p\in W$. In fact, each possible classical consequence $z \in
Z({\mathcal L}^\Diamond)$ of $p$ is only determined  by the relation
$\Diamond p \leq z $. Thus $Z({\mathcal L}^\Diamond)$ is a fragment
of the classical discourse added to ${\mathcal L}$ which allows to
``predicate'' classical consequences about the properties of the
system encoded in ${\mathcal L}$ independently of the context. It is
important to remark that the contextual character of the quantum
discourse is only avoided when we refer to ``classical consequences
of properties about the system'' and not when referring to the
properties in themselves, i.e. independently of the choice of the
context. In fact, in our modal extension, the discourse about
properties is genuinely enlarged, but the contextual character
remains a main feature of quantum systems even when modalities are
taken into account.

\section{Square of Opposition: Classical Consequences and Contextual Valuations}

In this section we analyze the relations between propositions
encoded in the scheme of the Square of Opposition in terms of the
classical consequences of a chosen  property about the quantum
system. To do this, we use the modal extension. Let $\mathcal{L}$ be
an orthomodular lattice, $p\in {\mathcal L}$ and ${\mathcal
L}^\Diamond \in {\mathcal OML}^\Diamond$ a modal extension of
$\mathcal{L}$. We first study the proposition $\neg \Diamond \neg p$
denoted by $\Box p$. Note that $\Box p = \neg \Diamond \neg p = \neg
Min\{z \in Z({\mathcal L}^{\Diamond}): \neg p \leq z  \} = Max
\{\neg z \in Z({\mathcal L}^{\Diamond}): \neg p \leq z \} = Max
\{\neg z \in Z({\mathcal L}^{\Diamond}): \neg z \leq p \}$.
Considering $t = \neg z$ we have that $$\Box p = \neg \Diamond \neg
p = Max\{t \in Z({\mathcal L}^{\Diamond}): t \leq p \}$$ When $W$ is
a Boolean sublattice of ${\mathcal L}$ such that $p\in W$ (i.e. we
are fixing a context containing $p$),  $\Box p$ is the greatest
classical proposition that implies $p$ in the  classically expanded
context $W^\Diamond$. More precisely, if $p$ is a consequence of
$z\in {\mathcal L}^{\Diamond}$ then $\Box p$ is consequence of $z$.

\begin{rema}
{\rm It is important to notice that Proposition \ref{CLASCONS}
allows us to refer to  classical consequences of a property of the
system independently of the chosen context. But in order to refer to
a property which is implied by a classical property, we need to fix
a context and consider the classically expanded context.}
\end{rema}

\begin{lem}\label{CLAS3}
Let ${\mathcal L}$ be an orthomodular lattice, $p \in {\mathcal L}$
and ${\mathcal L}^\Diamond$ be a modal extension of ${\mathcal L}$.
If $\Diamond p \land \Diamond \neg p = 0$ then $p\in Z({\mathcal
L})$.

\end{lem}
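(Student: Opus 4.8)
The plan is to prove that the hypothesis collapses $\Diamond p$ onto $p$, showing $p = \Diamond p$, and then to read off centrality of $p$ from the fact that $\Diamond p \in Z(\mathcal{L}^\Diamond)$. First I would record that both $\Diamond p$ and $\Diamond \neg p$ are central elements of $\mathcal{L}^\Diamond$ by definition of $\Diamond$, so they lie in the Boolean algebra $Z(\mathcal{L}^\Diamond)$. Within this Boolean algebra the orthocomplement $\neg(\cdot)$ acts as the Boolean complement, so the assumption $\Diamond p \land \Diamond \neg p = 0$ is equivalent to the inequality $\Diamond p \leq \neg \Diamond \neg p$.

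Next I would bound $\neg \Diamond \neg p$ from above by $p$. Applying axiom S1 to $\neg p$ gives $\neg p \leq \Diamond \neg p$, and taking orthocomplements reverses the inequality to $\neg \Diamond \neg p \leq p$; note that $\neg \Diamond \neg p$ is exactly the term $\Box p = Max\{t \in Z(\mathcal{L}^\Diamond) : t \leq p\}$ discussed above. Combining with S1 for $p$ itself, namely $p \leq \Diamond p$, and with the inequality from the first step, I obtain the chain
$$p \leq \Diamond p \leq \neg \Diamond \neg p \leq p,$$
which forces all four elements to coincide; in particular $p = \Diamond p$.

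Finally, since $p = \Diamond p$ and $\Diamond p \in Z(\mathcal{L}^\Diamond)$, the element $p$ is central in $\mathcal{L}^\Diamond$. Because $\mathcal{L}$ embeds into $\mathcal{L}^\Diamond$ as an orthomodular sublattice containing $p$ and $\neg p$, and the conditions defining centrality (the relation $(a,b,p)T$ for all $a,b$ together with complementation) involve only the lattice operations and the involution, which are preserved by the embedding, centrality of $p$ in the larger structure restricts to centrality in $\mathcal{L}$; thus $p \in Z(\mathcal{L})$. I expect the only delicate point to be this last transfer, i.e. verifying that an element of $\mathcal{L}$ which is central in $\mathcal{L}^\Diamond$ is automatically central in the subalgebra $\mathcal{L}$, since one must check that commutation with every element of the larger lattice entails commutation with every element of $\mathcal{L}$ and that the orthocomplement $\neg p$ supplies the required complement inside $\mathcal{L}$.
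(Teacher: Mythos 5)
Your proof is correct and follows essentially the same route as the paper: both arguments use centrality of $\Diamond p$ and $\Diamond\neg p$ to squeeze $p$ between $\neg\Diamond\neg p$ and $\Diamond p$ and conclude $p=\Diamond p\in Z({\mathcal L}^\Diamond)$, hence $p\in Z({\mathcal L})$. The only cosmetic difference is that the paper derives $\neg\Diamond\neg p=\Diamond p$ from uniqueness of complements in $Z({\mathcal L}^\Diamond)$ (using $\Diamond p\lor\Diamond\neg p=1$ as well as the hypothesis), whereas you get the inequality $\Diamond p\leq\neg\Diamond\neg p$ directly from the hypothesis and sandwich.
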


\begin{proof}
$\Diamond p$ and $\Diamond \neg p$ are central elements in
${\mathcal L}^\Diamond$. Taking into account that $1 = p\lor \neg p
\leq \Diamond p \lor \Diamond \neg p$, if $\Diamond p \land \Diamond
\neg p = 0$ then $\neg \Diamond p = \Diamond \neg p$ since the
complement is unique in $Z({\mathcal L}^\Diamond)$. Hence  $\Box p =
\neg \Diamond \neg p = \Diamond p$, $p\in Z({\mathcal L}^\Diamond)$
and $p\in Z({\mathcal L})$. %\qed
\end{proof}

Now we can interpret the relation between propositions in the Square
of Opposition. In what follows we assume that ${\mathcal L}$ is an
orthomodular lattice and $p \in {\mathcal L}$ such that $p \not \in
Z({\mathcal L})$, i.e. $p$ is not a classical proposition in a
quantum system represented by ${\mathcal L}$. Let ${\mathcal
L}^\Diamond$ be a modal extension of ${\mathcal L}$, $W$ be a
Boolean subalgebra of ${\mathcal L}$, i.e. a context, such that
$p\in W$ and consider a classically expanded context $W^\Diamond$.

\begin{itemize}
\item
$\neg \Diamond \neg p \hspace{0.2cm} \underline{contraries}
\hspace{0.2cm} \neg \Diamond p $
\end{itemize}

\noindent $\neg \Diamond p = \neg \Diamond \neg \neg p = \Box \neg
p$. Thus, the \emph{contrary proposition} is the greatest classical
proposition that implies $p$, i.e. $\Box p$, with  respect to  the
greatest classical proposition that implies $\neg p$, i.e. $\Box
\neg p$,  in each possible classically expanded context containing
$p,\neg p$.

In the usual explanation, two propositions are contrary iff they
cannot both be true but can both be false. In our framework we can
obtain a similar concept of contrary propositions. Note that $\Box p
\land \Box \neg p \leq p \land \neg p = 0$. Thus, there is not a
maximal Boolean filter containing $\Box p$ and $\Box \neg p$. Hence
there is not a Boolean valuation $v:W^\Diamond \rightarrow {\bf 2}$
such that $v(\Box p) = v(\Box \neg p) = 1$, i.e. $\Box p$ and $\Box
\neg p$ ``cannot both be true'' in each possible classically
expanded context.

Since $p \not \in Z({\mathcal L})$, by Lemma \ref{CLAS3}, $\Diamond
p \land \Diamond \neg p \not = 0$. Then there exists a maximal
Boolean filter $F$ in $W^\Diamond$ containing $\Diamond p$ and
$\Diamond \neg p$. $\Box p \not \in F$ otherwise $\Box p \land
\Diamond \neg p \in F$ and $\Box p \land \Diamond \neg p = \Diamond
(\Box p \land \neg p) \leq \Diamond (p \land \neg p) = 0$ which is a
contradiction. With the same argument we can prove that $\Box  \neg
p \not \in F$. If we consider the natural homomorphism $v:
W^\Diamond \rightarrow W^\Diamond / F \approx {\bf 2}$ then $v(\Box
p) = v(\Box \neg p) = 0$, i.e. $\Box p$ and $\Box \neg p$ can both
be false.

\begin{itemize}
\item
$\Diamond p \hspace{0.2cm} \underline{subcontraries} \hspace{0.2cm}
\Diamond \neg p $
\end{itemize}
\noindent The \emph{sub-contrary proposition} is the smallest
classical consequence of $p$ with respect to the smallest classical
consequence of $\neg p$. Note that sub-contrary propositions do not
depend on the context.

In the usual explanation, two propositions are sub-contrary iff they
cannot both be false but can both be true. Suppose that there exists
a Boolean homomorphism $v: W^\Diamond \rightarrow {\bf 2}$ such that
$v(\Diamond p) = v(\Diamond \neg p) = 0$. Consider the maximal
Boolean filter given by $Ker(v)$. Since $Ker(v)$ is a maximal filter
in $W^\Diamond$, $p\in F_v$ or $\neg p\in F_v$. If $p\in F_v$ then
$v(\Diamond p) = 1$ which is a contradiction, if $\neg p\in F_v$
then $v(\Diamond \neg p) = 1$ which is a contradiction too. Hence
$v(\Diamond p) \not = 0$ or $v(\Diamond \neg p) \not = 0$, i.e. they
cannot both be false. Since $p \not \in Z({\mathcal L})$, by Lemma
\ref{CLAS3}, $\Diamond p \land \Diamond \neg p \not = 0$. Then there
exists a maximal Boolean filter $F$ in $W^\Diamond$ containing
$\Diamond p$ and $\Diamond \neg p$. Hence the Boolean homomorphism
$v: W^\Diamond \rightarrow W^\Diamond / F \approx {\bf 2}$ satisfies
that $v(\Diamond p) = v(\Diamond \neg p) = 1$, i.e. $\Diamond p$ and
$\Diamond \neg p$ can both be true.

\begin{itemize}
\item
$\neg \Diamond \neg p \hspace{0.2cm} \underline{subalterns}
\hspace{0.2cm} \Diamond p $ and $\neg \Diamond p \hspace{0.2cm}
\underline{subalterns} \hspace{0.2cm} \Diamond \neg p $
\end{itemize}
The notion of sub-contrary propositions is reduced to the relation
between  $\Box p$ and $\Diamond p$. The \emph{subaltern proposition}
is the greatest classical proposition that implies $p$ with respect
to the smallest classical consequence of $p$.

In the usual explanation, a proposition is subaltern of another one
called {\it superaltern}, iff it must be true when its superaltern
is true, and the superaltern must be false when the subaltern is
false. In our case  $\neg \Diamond \neg p = \Box p$ is superaltern
of $\Diamond p$ and $\neg \Diamond p = \Box \neg p$ is superaltern
of $\Diamond \neg p$. Since $\Box p \leq p \leq \Diamond p$, for
each valuation $v:W^\Diamond \rightarrow {\bf 2}$, if $v(\Box p) =
1$ then $v(\Diamond p) = 1$ and if  $v(\Diamond p) = 0$ then $v(\Box
p) = 0$.

\begin{itemize}
\item
$\neg \Diamond \neg p \hspace{0.2cm} \underline{contradictories}
\hspace{0.2cm} \Diamond \neg p $ and $\Diamond p \hspace{0.2cm}
\underline{contradictories} \hspace{0.2cm} \neg \Diamond p $
\end{itemize}
The notion of \emph{contradictory proposition} can be reduced to the
relation between  $\Diamond p$ and $\Box \neg p$. The contradictory
proposition  is the greatest classical proposition that implies
$\neg p$ with respect to the the smallest classical consequence of
$p$. In the usual explanation, two propositions are contradictory
iff they cannot both be true and they cannot both be false.  Due to
the fact that $ker(v)$ is a maximal filter in $W^\Diamond$,  each
maximal filter $F$  in $W^\Diamond$ contains exactly one of
$\{\Diamond p, \neg \Diamond p \}$ for each Boolean homomorphism
$v:W^\Diamond \rightarrow {\bf 2}$, $v(\Diamond p) = 1$ and $v(\neg
\Diamond p) = 0$ or $v(\Diamond p) = 0$ and $v(\neg \Diamond p) =
1$.  Hence, $\Diamond p$ and $\Diamond p$ cannot both be true and
they cannot both be false.

\end{document}